\documentclass{article}

\usepackage{authblk}
\usepackage{graphicx}
\usepackage{wrapfig}
\usepackage{braket}
\usepackage{bbold}
\usepackage{amsmath}
\usepackage{amssymb}
\usepackage{amsthm}
\usepackage{dsfont}
\usepackage{tikz}
\usepackage{quantikz}
\usepackage{amsfonts,amsxtra}
\usepackage[
    backend=biber,
    bibstyle=ieee,
    citestyle=numeric-comp,
    sorting=nyt,
]{biblatex}
\usepackage[
  top=2.5cm,
  bottom=2.5cm,
  left=2.5cm,
  right=2.5cm
]{geometry}
\usepackage{xcolor}
\usepackage[most]{tcolorbox}
\usepackage{hyperref}
\hypersetup{
    colorlinks=true,
    linkcolor=blue,
    citecolor=blue,
    urlcolor=blue
    }
\usepackage{subcaption}
\newsavebox{\bccaption}

\newtheorem{theorem}{Theorem}
\newcommand{\SEP}{\operatorname{SEP}}

\title{Quantum Heuristic Separation of Comb Inequalities for the Travelling Salesman Problem}
\author[1]{Yohan Finet\thanks{yohan.finet@usherbrooke.ca}}
\author[1]{Victor Drouin-Touchette\thanks{victor.drouin-touchette@usherbrooke.ca}}
\affil[1]{Institut Quantique and Département de Génie Électrique et de Génie Informatique, Faculté de Génie, Université de Sherbrooke, Sherbrooke, Québec, J1K 2R1, Canada}

\begin{document}

\maketitle

\begin{abstract}
Comb inequalities are essential cutting planes for the travelling salesman problem, yet their separation remains computationally challenging. We propose a quantum heuristic separation algorithm for this family of inequalities. The algorithm uses a procedure that prepares a uniform superposition of all combs with a fixed number of teeth using polynomially many gates and qubits. As part of this procedure, we present a circuit that coherently encodes the Hamming weight of a data register in one-hot form in an auxiliary register. The separation algorithm then searches for violated inequalities using the Grover-Mixer Quantum Alternating Operator Ansatz. Its running time is polynomial under prescribed polynomial budgets for the number of QAOA rounds, parameter optimization steps and samples gathered, although we provide no evidence that such budgets suffice to find a violated comb with constant probability when one exists. We provide an explicit circuit construction for the mixer unitary, the full problem Hamiltonian and a resource analysis.
\end{abstract}

\section{Introduction}
The Travelling Salesman Problem (TSP) asks for a minimum-weight Hamiltonian cycle in a weighted complete graph. When the distance between two cities is the same in both directions, it is called a symmetric TSP; otherwise, it is called an asymmetric TSP. The latter can be transformed into a symmetric problem by doubling the number of vertices. TSP appears in many fields such as logistics (planning school bus routes or postal distribution), biology (genome sequencing), electronics (testing integrated circuits and drilling boards) and astronomy (planning telescopic observations) among others \cite{applegate2007}. The decision version of the TSP was shown by Karp to be NP-complete \cite{Karp1972}. This result also holds for metric and Euclidean instances \cite{Papadimitriou1977}. Furthermore, it is NP-hard to obtain an approximate solution to metric instances with an approximation factor less than $\frac{123}{122}$ in the symmetric case and $\frac{75}{74}$ in the asymmetric case \cite{Karpinski2015}. For some $c>10^{-36}$, Karlin et al.~\cite{Karlin2021} gave a polynomial-time $\frac{3}{2}-c$ approximation algorithm for symmetric metric instances. On the other hand, Arora showed a polynomial-time approximation scheme for Euclidean instances \cite{arora1996}.

The most successful exact approaches to the TSP employ a branch-and-cut framework. The efficiency of this search depends critically on the cutting planes (the inequalities) added to tighten the linear programming (LP) relaxation used in this algorithm. The task of finding, within a given class, inequalities violated by a solution to the LP relaxation is called separation. One important class for the TSP is that of comb inequalities, introduced in their general form by Gr\"{o}tschel and Padberg \cite{Grotschel1979a,Grotschel1979b}. The separation problem for this class was recently shown to be NP-hard by the authors in a companion paper~\cite{finet2026}.

Quantum computing offers alternative strategies for combinatorial optimization problems. Grover's search algorithm \cite{grover1996} achieves a quadratic speedup for unstructured search. Brassard et al.\ \cite{brassard2002} generalised this to amplitude amplification, which boosts the probability of measuring a marked state from $p$ to near unity using $O(1/\sqrt{p})$ iterations. When the fraction of marked states is exponentially small in the problem size, the required number of iterations grows exponentially, limiting the utility of amplitude amplification as a standalone algorithm. Optimization by decoded quantum interferometry (DQI) \cite{Jordan2025} also uses interference to approximate solutions to optimization problems. However, the algorithm requires specifying the problem in a particular way which makes it difficult to adapt it to a specific problem.

A wide range of combinatorial optimization problems can be reformulated as quadratic unconstrained binary optimization (QUBO) problems \cite{Lucas2014}, in which constraints are incorporated into the objective function through slack variables and penalty terms scaled by Lagrange multipliers. Two major quantum heuristic paradigms are well-suited to this formulation. Rooted in the adiabatic theorem \cite{born1928}, quantum annealing \cite{Kadowaki1998} prepares a system in the ground state of an initial Hamiltonian and slowly evolves it toward a problem Hamiltonian whose ground state encodes the optimal solution. The quantum approximate optimization algorithm (QAOA) \cite{farhi2014} is a hybrid classical-quantum technique in which a quantum state is constructed through alternating applications of problem and mixer unitaries, whose parameters are optimized by a classical outer loop. Each approach carries notable practical drawbacks: quantum annealing is constrained by hardware connectivity that frequently conflicts with the graph structure of the QUBO instance, whereas deeper QAOA circuits can accumulate more errors on noisy hardware and make classical parameter optimization more demanding.

We believe that achieving near-term quantum advantage in real-world settings will demand hybrid algorithms that tightly couple quantum subroutines with classical solvers, judiciously exploiting the strengths of current quantum devices where they are most beneficial. With that in mind, we introduce in this paper a quantum heuristic separation algorithm for comb inequalities and show how to integrate it into a hybrid optimizer for the TSP. Given a fixed fractional solution $x^* \in [0,1]^{|E|}$ (where the number of edges $|E| = n(n-1)/2$ for the complete graph) obtained from the LP relaxation, we show how to prepare, for a given number of teeth, a quantum superposition of all corresponding comb structures in a quantum register. In doing so, we also prove Theorem~\ref{thm:main}, where $\mathcal{C}_s^n$ is the set of combs with $s$ labelled teeth on a graph with $n$ vertices, with different orderings of the teeth counted as distinct elements.

\begin{theorem}\label{thm:main}
For every odd integer $s$ satisfying $3 \leq s \leq \lfloor n/2 \rfloor$, the uniform state
\begin{equation}
    \ket{F_s^n} = \frac{1}{\sqrt{|\mathcal{C}_s^n|}}\sum_{C \in \mathcal{C}_s^n} \ket{C},
\end{equation}
can be prepared exactly using $O(sn)$ qubits and $O(sn^3)$ one- and two-qubit gates when arbitrary single-qubit rotations are available.

Alternatively, for every $0<\epsilon<1$, a circuit over the finite gate set $\{H,S,T,\mathrm{CNOT}\}$ can prepare the state $\ket{F_s^n}$ to trace-distance error at most $\epsilon$ using $O(ns)$ qubits and $O(sn^3\log(n/\epsilon))$ gates.
\end{theorem}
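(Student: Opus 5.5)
We need to recall the definition of a comb. A comb consists of a handle $H \subset V$ and $s$ teeth $T_1,\dots,T_s$. The teeth are pairwise disjoint, and each tooth meets $H$ and also meets $V\setminus H$. The number of teeth $s$ is odd and at least $3$.

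\paragraph{Encoding.} Each comb is stored as one handle register $h\in\{0,1\}^n$ together with $s$ tooth registers $t_1,\dots,t_s\in\{0,1\}^n$. This uses $(s+1)n = O(sn)$ qubits. The difficulty is that the constraints couple the registers: teeth must be disjoint, each tooth must straddle the handle, and a vertex belongs to at most one tooth.

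\paragraph{Vertex labelling.} I would describe a comb vertex by vertex. Each vertex $v$ gets a label $(a_v,b_v)$, where $a_v\in\{0,1\}$ records membership in $H$ and $b_v\in\{0,1,\dots,s\}$ records which tooth contains $v$ ($0$ meaning none). The labelling is valid exactly when, for every tooth $j$, some $v$ has $(1,j)$ and some $v$ has $(0,j)$.

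Counting valid labellings is an inclusion–exclusion, or equivalently a dynamic programme. It tracks, for each tooth, whether its inside part and its outside part have been seen yet. That state space is exponential in $s$. By symmetry, however, only the numbers of teeth in each of the four statuses matter, giving $O(s^3)$ states. Even so, coherent sequential sampling with that many states looks expensive.

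\paragraph{Stratified construction.} I would instead stratify so that the sampling factorises.

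\emph{Step 1: the straddle sets.} Prepare a superposition over the set $U$ of vertices lying in some tooth inside $H$, and the set $W$ of vertices lying in some tooth outside $H$. The weights must be proportional to the number of completions. The number of completions depends only on $|U|=k$ and $|W|=\ell$: it equals $S(k,s)\,s!\cdot S(\ell,s)\,s!$ (surjections onto labelled teeth) times $2^{n-k-\ell}$ (each remaining vertex chooses freely whether it lies in $H$).

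\emph{Step 2: the sizes.} Prepare a superposition over $(k,\ell)$ with amplitudes $\sqrt{\binom{n}{k}\binom{n-k}{\ell}\,\mathrm{Surj}(k,s)\,\mathrm{Surj}(\ell,s)\,2^{n-k-\ell}}$. This is an $O(n^2)$-dimensional state on $O(\log n)$ qubits. Controlled on $(k,\ell)$, prepare Dicke-type states selecting which vertices form $U$ and which form $W$. This is where the one-hot Hamming-weight circuit mentioned in the abstract enters. It converts between the binary count and a one-hot register, and each Dicke state costs $O(n^2)$ gates.

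\emph{Step 3: the surjections.} For each vertex of $U$ (and likewise of $W$), coherently assign a tooth label. The assignment of all labels must be a uniform superposition over surjections onto $[s]$. I would sample sequentially over the vertices, tracking the number of teeth already hit. The transition amplitudes come from ratios of Stirling-type counts, and the tracked count is held in one-hot form over $O(s)$ qubits. Each vertex step applies $O(s)$ controlled rotations into the one-hot tooth register.

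The exact label sampling under the surjectivity constraint is delicate. One option is to assign label classes in a canonical "first-occurrence" order and then apply a uniform permutation of the labels. The resulting counting gives $O(n)$ vertices, times $O(s)$ states, times $O(n)$ for controlled writes into the $s$ tooth registers, for a total of $O(sn^2)$–$O(sn^3)$.

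\emph{Step 4: uncompute.} Finally, the auxiliary registers $U$, $W$, $k$, $\ell$ and the one-hot counters are computed from the output and cleared, so no garbage is left. This is possible because, for instance, $U$ is the OR of $h\wedge t_j$ over $j$. Uncomputing the counters is done by reversing the Hamming-weight circuit.

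\paragraph{Main obstacle.} I expect the main obstacle to be step 3 together with the amplitude computation. The rotation angles are ratios of huge integers, so they would be precomputed classically. The gate cost must be kept to $O(sn^3)$ overall, which I would check by summing the gate counts of each controlled rotation block.

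\paragraph{Finite gate set.} For the second statement, there are $G=O(sn^3)$ arbitrary rotations. Approximate each one to accuracy $\epsilon/G$ using the Solovay–Kitaev theorem or an optimal Clifford+T synthesis (Ross–Selinger), at a cost of $O(\log(G/\epsilon)) = O(\log(n/\epsilon))$ gates per rotation, absorbing $\log s\le\log n$. Errors add subadditively in operator norm, which bounds the trace distance by $\epsilon$.
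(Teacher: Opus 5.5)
\paragraph{Verdict: genuine gap in Step~3.} Your counting identity is right. A labelled comb is exactly a choice of disjoint sets $U=H\cap\bigcup_j T_j$ and $W=(\bigcup_j T_j)\setminus H$, two surjections $f\colon U\to[s]$ and $g\colon W\to[s]$, and a free handle bit on every other vertex. So the amplitudes of Step~2 do multiply out to $1/\sqrt{|\mathcal{C}_s^n|}$. The gap is Step~3, which carries essentially all of the difficulty and which you leave as a menu of options rather than a circuit.

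\emph{The counter is not enough.} A vertex-by-vertex sampler cannot set its amplitudes from the number $m$ of teeth already hit. The number of completions after giving the next vertex label $j$ depends on whether that particular $j$ is already hit, which is a function of the history, not of $m$. It also depends on how many vertices of $U$ are still to come. Since $U$ is in superposition, that number must be tracked coherently, and your $O(s)$-qubit counter and your cost estimate both omit it.

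\emph{The permutation fix is unspecified.} Canonical first-occurrence labels followed by a uniform permutation of $[s]$ need three things: a register in uniform superposition over $S_s$, a coherent way of applying it to labels already written at superposed positions, and two \emph{independent} permutations for $U$ and $W$. A shared permutation reaches only $S(k,s)S(\ell,s)\,s!$ of the $S(k,s)S(\ell,s)(s!)^2$ pairs $(f,g)$, because $\pi^{-1}\circ g$ is generally not canonical.

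\emph{Garbage is left behind.} Step~4 never uncomputes the permutation registers. Nor does it uncompute the information (e.g.\ per-vertex new-label flags) that a reversible update of the counter needs. Left in place, these stay entangled with the comb registers, so the output is not $\ket{F_s^n}\otimes\ket{0}$ and the preparation is not exact. All of this is a function of the final comb, so it can in principle be cleaned up. It could also be avoided, e.g.\ by recomputing the set of hit labels from the partially written tooth registers at each step. But that circuit and its gate count are exactly what the theorem asks for. Your $O(sn^2)$--$O(sn^3)$ is an estimate, not a count.

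\emph{A smaller hole in Step~2.} Placing $\ket{D^{n-k}_\ell}$ on the complement of a superposed $U$ needs a routing gadget that you do not describe.

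\paragraph{How the paper avoids this.} The paper sidesteps the label symmetry by sampling tooth-major instead of vertex-major. Its completion count $N(c,t)$ depends only on the number $c$ of vertices not yet in a tooth and the number $t$ of teeth left, so $T_1,\dots,T_s$ are filled in label order.
\begin{itemize}
\item A one-hot copy of $c$ is obtained from the Fredkin cascade applied to the $Mask$ register.
\item Controlled on it, a $Dicke$ register is put in $\sum_k\alpha_k\ket{D^c_k}$ with $\alpha_k^2=\binom{c}{k}(2^k-2)N(c-k,t-1)/N(c,t)$.
\item This state is routed into the free positions by $Mask$-controlled swaps, each followed by a controlled cyclic shift.
\item Each tooth's handle part is then drawn from $\sum_{\ell=1}^{v-1}\sqrt{\binom{v}{\ell}/(2^v-2)}\,\ket{D^v_\ell}$ with the same routing.
\item Vertices in no tooth receive controlled Hadamards.
\end{itemize}
Nothing has to be symmetrised afterwards, and the $Mask$, $Count$ and $Dicke$ ancillas return to fixed states deterministically. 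The $O(sn)$ controlled Dicke blocks of $O(n^2)$ gates each give $O(sn^3)$ directly.

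\paragraph{Finite gate set.} For the Clifford$+T$ part, Solovay--Kitaev only gives $O(\log^{c}(1/\delta))$ gates per rotation with $c>1$. That would not yield $O(sn^3\log(n/\epsilon))$. You need an $O(\log(1/\delta))$ synthesis such as Ross--Selinger or Kliuchnikov--Maslov--Mosca; the paper uses the latter.
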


Consequently, quantum amplitude amplification can be used to search for violated combs, while the Grover-Mixer Quantum Alternating Operator Ansatz (GM-QAOA) offers a heuristic alternative with a potentially lower circuit cost per iteration. We present how the GM-QAOA algorithm can be adapted to obtain violated combs by providing an explicit circuit construction for the mixer unitary and the full problem Hamiltonian. The running time of this algorithm is polynomial under prescribed polynomial budgets for the number of QAOA rounds, parameter optimization and sampling, although we provide no evidence that these budgets suffice to find a violated comb with constant probability when one exists. An investigation of performance for realistic instances is left for future work.

\section{Background}

\subsection{Linear programming}

A linear programming (LP) problem aims to optimize a linear objective function under linear constraints. Its standard form is given by:
\begin{equation}\label{eq:linear_programming}
  \begin{split}
    \text{maximize} \quad & c^\intercal x \\
    \text{subject to} \quad & Ax \leq b, \quad x \in \mathbb{R}_{\geq 0}^n.
  \end{split}
\end{equation}
The feasibility problem for linear programs is known to be P-complete under logarithmic-space reductions~\cite{Dobkin1979, Khachiyan1979}. George Dantzig introduced in 1947 the simplex algorithm \cite{dantzig1990} to solve these problems. The algorithm is very efficient in practice~\cite{Borgwardt1982,Spielman2003}, but has exponential complexity in the worst case \cite{klee1972}. Here we will use $x$ to indicate a generic vector variable and $x^*$ for a fractional point.

Let $G = (V, E)$ be the complete undirected graph with $n = |V|$ vertices and edge weights $c_{ij}$ for each edge $(i,j)$. The TSP can be formulated as an integer linear program (ILP) by introducing binary variables $x_{ij} \in \{0,1\}$ indicating whether edge $(i,j)$ is included in the tour:

\begin{subequations}\label{eq:TSP_ILP}
\begin{alignat}{3}
& \text{minimise} \quad
&& \sum_{i < j} c_{ij} x_{ij} \qquad
&& \label{eq:TSP_ILP:line1}\\
& \text{subject to} \quad
&& \sum_{j \ne i} x_{ij} = 2 \qquad
&& \forall i \in V \label{eq:TSP_ILP:line2}\\
&
&& \sum_{i,j \in S, i < j} x_{ij} \leq |S| - 1 \qquad
&& \forall S \subset V \: : \: 2 \leq |S| \leq n - 1
   \label{eq:TSP_ILP:line3}\\
&
&& x_{ij} \geq 0 \qquad
&& \forall (i,j) \in E \label{eq:TSP_ILP:line4}\\
&
&& x_{ij} \in \{0, 1\} \qquad
&& \forall (i,j) \in E, \label{eq:TSP_ILP:line5}
\end{alignat}
\end{subequations}
where the degree constraints \eqref{eq:TSP_ILP:line2} ensure that every vertex is incident to exactly two tour edges, and the subtour elimination constraints \eqref{eq:TSP_ILP:line3} prevent the formation of disconnected cycles over subsets of~$V$. The convex hull in $\mathbb{R}^{|E|}$ of incidence vectors $x$ satisfying constraints \eqref{eq:TSP_ILP:line2} -- \eqref{eq:TSP_ILP:line5} is called the Symmetric Travelling Salesman Polytope and is denoted by STSP($n$). The polytope obtained when removing \eqref{eq:TSP_ILP:line5} is called the Subtour Elimination Polytope and is denoted by SEP($n$).

Since ILP optimization is NP-hard in general \cite{garey1979}, solving the TSP formulation~\eqref{eq:TSP_ILP} directly is intractable for large instances. The most successful exact approaches follow a work by Dantzig, Fulkerson and Johnson who proposed a cutting-plane method \cite{Dantzig1954} for the TSP. First, an LP relaxation of \eqref{eq:TSP_ILP} is formulated where only constraints \eqref{eq:TSP_ILP:line2} and \eqref{eq:TSP_ILP:line4} are retained. The method then iteratively solves the LP relaxation using the simplex and adds violated constraints taking the form \eqref{eq:TSP_ILP:line3} or other forms to the linear program to tighten the LP relaxation. Note that this procedure will always provide a lower bound on the value of the objective function for the optimal solution to the TSP. However, the ability to converge to an integer solution will depend on the ability of the procedure to find violated inequalities. Many classes of inequalities have been proven to be facet defining for STSP($n$), meaning that the points of the polytope satisfying each such inequality with equality form a facet of this polytope. For example, Gr\"{o}tschel \& Padberg \cite{Grotschel1979a,Grotschel1979b} proved that both subtour inequalities and comb inequalities (discussed in the next subsection) induce facets of STSP($n$). Clique-tree inequalities, a generalization of comb inequalities, were introduced and proven to be faced defining for STSP$(n)$ by Gr\"{o}tschel \& Pulleyblank \cite{Grotschel1986}. Path inequalities, another generalization of comb inequalities introduced by Cornuéjols, Funlupt and Naddef \cite{Cornuéjols1985}, were also proven to be facet defining for STSP$(n)$ by Naddef \& Rinaldi \cite{Naddef2007}. Many other valid or faced-defining inequalities for STSP$(n)$ exist, see for example \cite{Naddef2001} for a more extensive list. These ideas eventually led to the development of the branch-and-cut method which is at the core of the Concorde solver \cite{applegate2003} considered to be the leading exact solver for the TSP.

We now describe the basics of a branch-and-cut algorithm for the TSP illustrated in Fig.~\ref{fig:branch-and-cut} (see for example \cite{Padberg1990b} or \cite{applegate1995} for more details). The method builds a search tree where each node represents a system of linear inequalities. First, an initial solution is obtained using a heuristic such as Christofides' algorithm \cite{christofides1976} or the Lin-Kernighan algorithm \cite{lin1973}. At this point in the algorithm, this initial solution serves as the best solution found so far. The cutting-plane method is then performed as previously described. As more cuts are added, the increases in the optimal value of the LP relaxation tend to become smaller. When these increases become too small or no more cuts can be found, branching is performed. Branching consists of partitioning the set of tours $\mathcal{S}$ into two sets $\mathcal{S}_1$ and $\mathcal{S}_2$. For example, this can be done by setting the value of a decision variable to 0 or 1:
\begin{equation}
  \mathcal{S}_1=\{x \in \mathcal{S}:x_e=0\}, \; \mathcal{S}_2=\{x \in \mathcal{S}:x_e=1\},
\end{equation}
{\parfillskip=0pt
which provides two new subproblems associated with two new child nodes in the branch-and-cut tree, one for each of $\mathcal{S}_1$ and $\mathcal{S}_2$ and the parent node being associated to the LP problem on $\mathcal{S}$. The cutting-plane method is then performed on the first subproblem and then to the other which adds more inequalities to the system\par
}

\clearpage
\noindent
\begin{minipage}[b][\textheight][t]{0.48\textwidth}
    \setlength{\parindent}{1.5em}
    
    \noindent
     of linear inequalities. When the cutting-plane method yields an integer solution that is better than the best integer solution found so far, this new solution is saved as the best integer solution.
    
    Since the optimal solution of the relaxed LP problem provides a lower bound on the values of the integer solutions, if the value of the objective function obtained from the resolution of the relaxed LP associated to a given node in the tree is greater or equal to the value of the best integer solution obtained so far, this node can be pruned knowing that there is no integer solutions with a smaller objective value that can be obtained from this node or any of its descendants. The choice of the next subproblem to be processed depends on the implementation. For example, Hong \cite{Hong1972} used depth-first search whereas the code of Padberg \& Rinaldi \cite{padberg1991} and Concorde \cite{applegate2003} use best-bound search where the subproblem having the lowest LP bound is chosen. The branching process continues until all leaf nodes are pruned or provide an integer solution.

    An important step of the cutting-plane method consists in finding violated inequalities. Given a class of inequalities $\mathcal{C}$ and a vector $x^*$, a separation algorithm for $\mathcal{C}$ either returns an inequality in $\mathcal{C}$ that is violated by $x^*$ or a failure message. A separation algorithm is exact if it only returns a failure message when all inequalities in $\mathcal{C}$ are satisfied by $x^*$ while a heuristic separation algorithm may return a failure message even when some inequality in $\mathcal{C}$ is violated by $x^*$. Although there are exponentially many subtour elimination constraints, a polynomial-time exact separation algorithm for subtour inequalities was shown by Hong \cite{Hong1972} and further improved by Crowder \& Padberg \cite{Crowder1980} and Padberg \& Rinaldi \cite{Padberg1990a}. However, there are no known exact polynomial-time algorithm for the separation of comb inequalities \cite{applegate2007, Cook2019} and a proposed proof of NP-hardness for this problem was recently provided by the authors~\cite{finet2026}.
    
    \subsection{Separation of comb inequalities}
    
    In this work, we are interested in the comb inequalities of Gr\"{o}tschel \& Padberg \cite{Grotschel1979a,Grotschel1979b} who generalized classes of inequalities introduced by Chv{\'a}tal \cite{Chvatal1973} and Edmonds \cite{edmonds1965}. Given a complete undirected graph $G=(V,E)$ and the vector variable $x$ introduced in the previous subsection, the comb inequality is given by
    \begin{equation}\label{eq:comb_ineq}
        x\bigl(\delta(H)\bigr)+\sum_{j=1}^{s}x\bigl(\delta(T_j)\bigr) \geq 3s+1,
    \end{equation}
\end{minipage}\hfill
\begin{minipage}[b][\textheight][t]{0.49\textwidth}

  \sbox{\bccaption}{%
    \begin{minipage}{\linewidth}
      \captionof{figure}{Branch-and-cut algorithm. The cost vector is denoted by $c$ and $\overline{x}$ is the best integer solution found so far. The pair $\langle F_0, F_1 \rangle$ consists of disjoint edge sets $F_0, F_1 \subseteq E$. $\mathcal{L}$ is a set of inequalities, and $P(\mathcal{L}, F_0, F_1)$ is the linear program minimizing $c^\intercal x$ subject to \eqref{eq:TSP_ILP:line2}, \eqref{eq:TSP_ILP:line4}, $\mathcal{L}$, $x_e = 0$ for $e \in F_0$ and $x_e = 1$ for $e \in F_1$. Its solution is denoted $x^*$.}
      \label{fig:branch-and-cut}
    \end{minipage}%
  }

  \centering
  \includegraphics[
    width=\linewidth,
    height=\dimexpr
      \textheight-\ht\bccaption-\dp\bccaption-8pt
    \relax,
    keepaspectratio
  ]{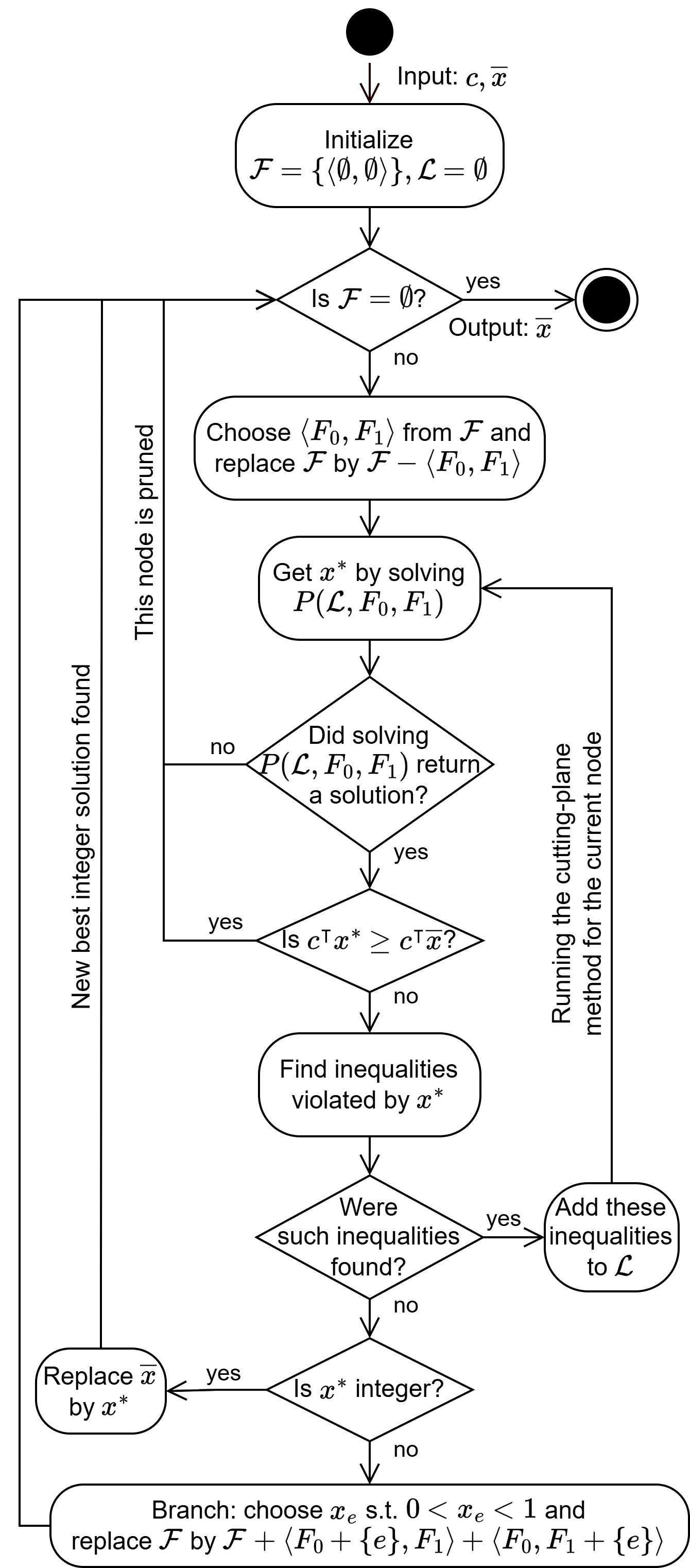}
  \par
  \vspace{8pt}
  \vfill
  \nointerlineskip
  \usebox{\bccaption}
\end{minipage}
\par
\clearpage

\noindent
where the handle $H \subset V$ and teeth $T_1,\ldots,T_s \subset V$ illustrated in Fig.~\ref{fig:comb} satisfy the following conditions:

\begin{enumerate}
    \item \label{cond1}$T_j \setminus H \neq \emptyset$ for all $j$,
    \item \label{cond2}$T_j \cap H \neq \emptyset$ for all $j$,
    \item \label{cond3}$T_i \cap T_j = \emptyset$ for all $i \neq j$,
    \item \label{cond4}$s \geq 3$ and $s$ is odd.
\end{enumerate}
Here, $\delta(S) = \{e \in E : |e \cap S| = 1\}$ denotes the cut induced by $S \subseteq V$, and $x(F) = \sum_{e \in F} x_e$ denotes the sum of the components of the vector $x$ over the edge set $F \subseteq E$.

Despite the NP-hardness of separating comb inequalities~\cite{finet2026}, the operations research community has produced a rich set of approaches. Padberg \& Rao \cite{Padberg1982} described a polynomial-time separation algorithm for the 2-matching inequalities (also known as blossom inequalities), which are comb inequalities having the special property that $|T_j \cap H|=1$ and $|T_j \backslash H|=1$ for all $j$. Their algorithm solves a sequence of $O(|E^*|)$ max-flow problems. Given a fractional solution $x^*$ obtained from the relaxed LP problem, we have $E^*:=\{ e \in E: x_e^* > 0\}$ and the support graph $G^*$ is the subgraph of $G$ induced by $E^*$. Using the pre-flow push algorithm \cite{Goldberg1988}, their algorithm can be implemented to run in $O(n|E^*|^2\log(n^2/|E^*|))$ time. For combs with a fixed number of teeth $s$, Carr~\cite{Carr1997} devised a separation method by solving $O(n^{2s})$ maximum flow problems, where $n$ denotes the number of vertices. However, since $s$ can be any odd integer between $3$ and $\lfloor n/2 \rfloor$, this method does not run in polynomial time when the number of teeth is not bounded. For fractional points of SEP($n$) whose support graph is planar, Fleischer and Tardos~\cite{Fleischer1999} proposed an algorithm with $O(n^2 \log n)$-time complexity to identify maximally violated comb inequalities. A comb is maximally violated if its slack is $-1$ in Eq.~\eqref{eq:comb_ineq}, the largest possible violation when $x^* \in \SEP(n)$. Letchford~\cite{Letchford2000} introduced the broader family of domino-parity inequalities and obtained an $O(n^3)$-time separation algorithm under the same planarity assumption. Cook, Espinoza, and Goycoolea~\cite{Cook2010} subsequently extended this result to a larger family, a subset of which includes all clique-tree inequalities and admits polynomial-time separation when the support graph is planar and the number of handles is bounded by a fixed constant. Caprara, Fischetti and Letchford~\cite{Caprara2000} demonstrated that comb inequalities belong to the family of $\{0, \frac{1}{2}\}$-cuts and provided an $O(n^2|E^*|)$-time algorithm to detect maximally violated ones. Fleischer, Letchford and Lodi~\cite{Fleischer2006} established the feasibility of polynomial-time separation for a superclass of simple comb inequalities when $x \in \text{SEP}(n)$. Note that these algorithms for larger families may return violated inequalities that are not comb inequalities. Consequently, the results presented by Caprara, Fischetti and Letchford, for example, do not automatically extend to comb separation under the same assumptions since their algorithm may return a maximally violated \{0,$\frac{1}{2}$\}-cut that is not a comb inequality whether a maximally violated comb inequality exists or not. Heuristic separation methods were also proposed by Padberg and Rinaldi~\cite{Padberg1990b}, Clochard and Naddef~\cite{Clochard1993}, and Naddef and Thienel~\cite{Naddef2002}.

\begin{figure}
  \centering
  \includegraphics[width=0.4\textwidth]{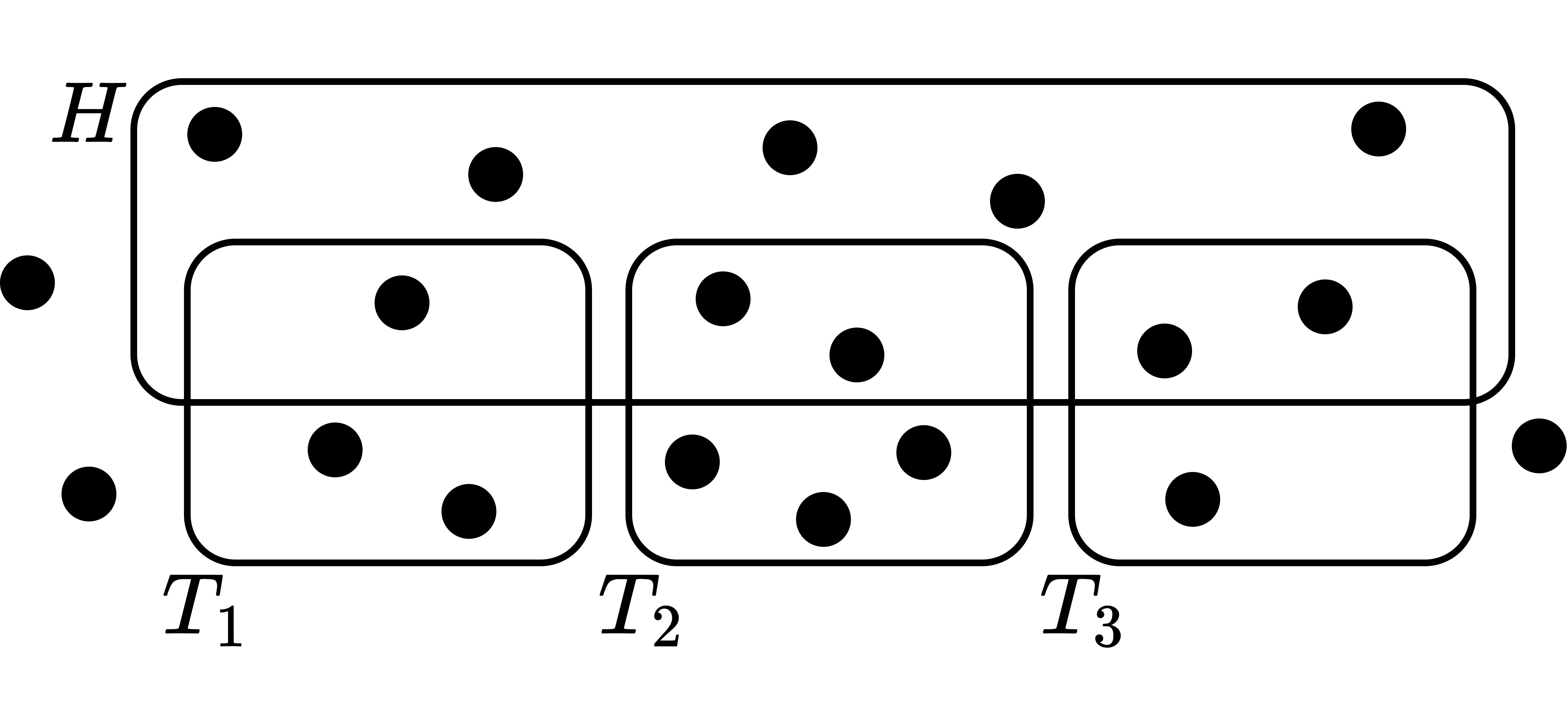}
  \caption{A comb with handle $H$ and teeth $T_1$, $T_2$ and $T_3$.}
  \label{fig:comb}
\end{figure}

\subsection{The Quantum Alternating Operator Ansatz}

One of the prominent quantum heuristic approaches to combinatorial optimization is based on the quantum approximate optimization algorithm (QAOA) \cite{farhi2014}. In this algorithm, given an objective function $f$ to be optimized, a phase Hamiltonian $H_P$ diagonal in the computational basis encodes $f$ such that ${H_P \ket{x}=f(x)\ket{x}}$. The mixing Hamiltonian $H_M$ is the transverse-field Hamiltonian $H_M=\sum_{i=1}^n \sigma_i^x$, where $n$ is the number of qubits and $\sigma_i^x$ denotes the Pauli $\sigma_x$ operator acting on qubit $i$. The quantum computer is initialized in the uniform superposition
\begin{equation}
    \ket{s}= \frac{1}{\sqrt{2^n}}\sum_x \ket{x},
\end{equation}
which is the ground state of $-H_M$. A parameterized state is prepared through $p$ rounds of alternating evolution under $H_P$ and $H_M$:
\begin{equation}\label{eq:qaoa}
    \ket{\boldsymbol{\beta},\boldsymbol{\gamma}}=e^{-i\beta_p H_M}e^{-i\gamma_p H_P} \dots e^{-i\beta_2 H_M}e^{-i\gamma_2 H_P}e^{-i\beta_1 H_M}e^{-i\gamma_1 H_P}\ket{s},
\end{equation}
where $\boldsymbol{\beta},\boldsymbol{\gamma}\in\mathbb{R}^p$. For each parameter choice, the state in Eq.~\eqref{eq:qaoa} is repeatedly prepared and measured in the computational basis. The sampled objective values provide an estimate of the expectation $\langle f \rangle=\bra{\boldsymbol{\beta},\boldsymbol{\gamma}} H_P \ket{\boldsymbol{\beta},\boldsymbol{\gamma}}$. A classical optimization loop updates the parameters to optimize this expectation, and the algorithm returns the best solution sampled.

Hadfield et al.~\cite{Hadfield2019} generalized this framework through the quantum alternating operator ansatz (also QAOA), allowing parameterized unitary families that need not arise from evolution under fixed Hamiltonians. For constrained optimization over a feasible set $F$, the computational-basis states encoding feasible solutions span a subspace of the Hilbert space spanned by the qubits to which the circuit evolution can be restricted. Equation~\eqref{eq:qaoa} becomes
\begin{equation}
    \ket{\boldsymbol{\beta},\boldsymbol{\gamma}}=U_M(\beta_p)U_P(\gamma_p) \dots U_M(\beta_2)U_P(\gamma_2)U_M(\beta_1)U_P(\gamma_1)\ket{s}.
\end{equation}
Here, $\ket{s}$ is a feasible initial state, typically preparable by a constant-depth circuit and often chosen to encode a single feasible solution. The phase-separation unitaries $U_P(\gamma)$ depend on the objective function, while the mixing unitaries $U_M(\beta)$ depend on the feasible domain and its structure. The phase separator must be diagonal in the computational basis and is usually chosen as $U_P(\gamma)=e^{-i\gamma H_P}$. The mixing unitary is required to take feasible states to feasible states and to provide transitions between all pairs of feasible states.

Grover-Mixer QAOA (GM-QAOA)~\cite{Bartschi2020} is a particular realization of the quantum alternating operator ansatz algorithm that uses $U_M(\beta)=e^{-i\beta \ket{F} \bra{F}}$, where 
\begin{equation}
    \ket{F}=\frac{1}{\sqrt{|F|}}\sum_{x\in F}\ket{x}    
\end{equation}
is the uniform superposition over the feasible set $F$. The initial state is taken to be $\ket{s}=\ket{F}$ and is prepared by a unitary $U_S$ satisfying $U_S\ket{0}^{\otimes n}=\ket{F}$. The mixer can therefore be expressed as
\begin{equation}
    U_M=U_S (\mathds{1}-(1-e^{-i\beta})(\ket{0}\bra{0})^{\otimes n}) U_S^\dagger,
\end{equation}
which can be implemented using the circuits for $U_S$ and $U_S^\dagger$, two layers of $X$ gates and a multi-controlled phase shift gate. This construction shifts the challenge of mixer design to the efficient preparation of the uniform superposition over feasible solutions.

\section{GM-QAOA for comb inequalities}

\subsection{Encoding}

We begin by describing how an arbitrary comb with $s$ teeth can be encoded using qubits. A quantum register of $n$ qubits represents the handle, and an additional register of the same size is allocated for each tooth. Within each register, qubits are indexed from $1$ to $n$, and vertex $i$ is considered to belong to the handle (or to a tooth) if and only if qubit $i$ is in state $\ket{1}$ in the corresponding register. Taken together, these registers form the comb register comprising at most $N_{\mathrm{comb}} = n\left(\lfloor \frac{n}{2} \rfloor + 1 \right)$ qubits in total as there are $\lfloor \frac{n}{2} \rfloor$ possible teeth. Note that permuting the tooth labels produces different computational-basis states that represent the same comb inequality. This symmetry could in principle be exploited to reduce the number of states used to encode the set of valid combs by a factor of up to $\lfloor \frac{n}{2} \rfloor!$, though we do not pursue this reduction in the present work.

We note that a different encoding of combs can be defined. Since teeth are pairwise disjoint, every vertex belongs to exactly one of the following categories:
\begin{itemize}
    \item $H \cap T_j$ for some $j$ (in the handle and in tooth $j$),
    \item $T_j \setminus H$ for some $j$ (in tooth $j$ but not in the handle),
    \item $H \setminus \bigcup_j T_j$ (in the handle but in no tooth),
    \item $V \setminus (H \cup \bigcup_j T_j)$ (in neither the handle nor any tooth).
\end{itemize}
Each vertex therefore falls into one of $2s + 2$ categories: for each tooth $j$, there are two categories ($H \cap T_j$ and $T_j \setminus H$), plus handle only and neither. Combs can thus be represented using $n$ qudits of dimension $2s+2$. Encoding each $d$-dimensional qudit requires $\lceil\log_2(d)\rceil$ qubits, so $n\lceil\log_2(2s+2)\rceil$ qubits would be needed, which is at most $n\lceil\log_2(n+2)\rceil$ since $s\leq\lfloor n/2\rfloor$. This encoding would reduce the number of qubits required and the circuit would have to be adapted, but note that it doesn't change the size of the solution space to be explored since the QAOA mixer keeps states in the feasible subspace. For the remainder of this work, we will use the $N_{\mathrm{comb}}$-qubit encoding discussed previously.

\subsection{Phase-separation operator}

The input to the quantum separation algorithm is a fractional LP solution $x^* \in [0,1]^{|E|}$, obtained by solving the LP relaxation of the TSP, where each component $x_e^*$ is a continuous value representing the fractional weight assigned to edge $e$ in the relaxed solution. We now describe how these fractional values are used to construct a cost Hamiltonian encoding the following objective function:
\begin{equation}\label{eq:obj_func}
    \text{minimise} \quad x^*\bigl(\delta(H)\bigr)+\sum_{j=1}^{s}x^*\bigl(\delta(T_j)\bigr).
\end{equation}
As indicated by~\eqref{eq:comb_ineq}, a violated comb inequality corresponds to an objective value in~\eqref{eq:obj_func} smaller than $3s+1$.

The fractional weights $x_e^*$ are encoded as interaction energy coefficients in the Hamiltonian as follows. For a set $S_i$ (handle or tooth), let $b_{k,i}\in\{0,1\}$ indicate whether vertex $k$ belongs to $S_i$ and let $\ket{C}$ denote the quantum state encoding the comb $C$. For an edge $e = (k,l)$, we have $e \in \delta(S_i)$ if and only if $b_{k,i}\oplus b_{l,i}=1$. Let $\sigma^z_{k,i}$ denote the Pauli-$Z$ operator acting on qubit $k$ of register $i$. The projector onto states in which exactly one endpoint of $e$ belongs to $S_i$ is
\begin{equation}\label{eq:xor_ising}
    \Pi_{e,i}
    = \frac{1}{2}
       \bigl(\mathds{1}-\sigma^z_{k,i}\sigma^z_{l,i}\bigr).
\end{equation}
Indeed, $\Pi_{e,i}\ket{C} = (b_{k,i}\oplus b_{l,i})\ket{C}$.

Equation~\eqref{eq:xor_ising} introduces a $ZZ$ coupling between the corresponding comb register qubits. We assign an energy coefficient $x_e^*$ to each such interaction. Thus, for a given state encoding a particular comb ${C=(H, T_1, \ldots, T_s)}$, the total energy contribution from the cuts is given by the Hamiltonian
\begin{equation}\label{eq:cut_energy}
  \begin{split}
    H_{\mathrm{cuts}}
    &= \sum_{e \in \delta(H)} x_e^*\Pi_{e,H}  + \sum_{j=1}^{s} \; \sum_{e \in \delta(T_j)} x_e^* \Pi_{e,T_j}\\
    &= \sum_{e=\{k,l\} \in \delta(H)} \frac{x_e^*}{2}\bigl(\mathds{1} - \sigma^z_{k,H}\, \sigma^z_{l,H}\bigr) + \sum_{j=1}^{s} \; \sum_{e=\{k,l\} \in \delta(T_j)} \frac{x_e^*}{2}\bigl(\mathds{1} - \sigma^z_{k,T_j}\, \sigma^z_{l,T_j}\bigr)\\
    &=\sum_{i=1}^{s+1} \; \sum_{e=\{k,l\} \in E} \frac{x_e^*}{2}\bigl(\mathds{1} - \sigma^z_{k,i}\, \sigma^z_{l,i}\bigr),
  \end{split}
\end{equation}
where we used the fact that ${\sigma^z_{k,i} \sigma^z_{l,i} \ket{\psi} = \ket{\psi}}$ when the edge $(k,l)$ is not in $\delta(S_i)$ to obtain the sum over $E$ on the last line. For every valid comb $C$, this operator satisfies
\begin{equation}\label{eq:H_cut_eigenvalues}
    H_{\mathrm{cuts}}\ket{C}
    =
    \left[
        x^*\bigl(\delta(H)\bigr)
        +\sum_{j=1}^{s}x^*\bigl(\delta(T_j)\bigr)
    \right]\ket{C}.
\end{equation}
Therefore, its eigenvalue on $\ket{C}$ is precisely the left-hand side of the comb inequality~\eqref{eq:comb_ineq}. The identity terms in Eq.~\eqref{eq:cut_energy} contribute the same constant energy shift to every state, which does not affect which state minimizes the energy and can thus be ignored. We therefore define the cost Hamiltonian as
\begin{equation}\label{eq:combs_cost_hamiltonian}
  H_C = -\sum_{e=\{k,l\} \in E}\; \sum_{i=1}^{s+1} \frac{x_e^*}{2}\, \sigma^z_{k,i}\, \sigma^z_{l,i},
\end{equation}
which is diagonal in the computational basis, as expected. The phase-separation unitary is then defined in the usual way as
\begin{equation}
  U_P(\gamma)=e^{-i\gamma H_C}.
\end{equation}

We note that only a small fraction of the terms appearing in equation \eqref{eq:combs_cost_hamiltonian} are nonzero. Indeed, the fractional solution is obtained by solving a LP relaxation of the following form:
\begin{equation}\label{eq:TSP_LP}
\begin{split}
\text{minimise} \quad & c^T x\\
\text{subject to} \quad
& x(\delta(v)) = 2 \quad \forall v \in V\\
& Cx \leq d\\
& 0 \leq x_e \leq 1 \quad \forall e \in E,
\end{split}
\end{equation}
where $Cx \leq d$ denotes a system of $m$ inequalities satisfied by all feasible tours. It is a property of the simplex method to guarantee that any optimal solution $x^*$ to \eqref{eq:TSP_LP} has at most $2n + m$ nonzero components~\cite{applegate2007}. Consequently, we expect the number of gates required to implement the phase-separation operator to be relatively small.

\subsection{Mixing operator}

We now present a circuit implementing the state-preparation unitary $U_S$ and requires only a polynomial number of gates to prepare a superposition of all the possible combs. This circuit is formed of a tooth-preparation circuit $U_T$ (section \ref{sec:tooth}) and a handle-preparation circuit $U_H$ (section \ref{sec:handle}) such that ${U_S = U_H U_T}$.

We note that, by combining this circuit with an oracle verifying inequality~\eqref{eq:comb_ineq}, one could in principle design a quantum exact separation algorithm via amplitude amplification \cite{brassard2002}. However, as we discuss in Section \ref{sec:conclusion}, the number of combs grows exponentially with the number of vertices so this approach would require an exponential number of Grover iterations if the number of violated combs doesn't grow at the same pace. Therefore, GM-QAOA acts as a more friendly heuristic approach.

\subsubsection{Tooth-preparation circuit} \label{sec:tooth}

The first step consists of preparing a superposition of the possible teeth. The circuit presented in Fig.~\ref{fig:teeth_circuit} achieves this and we now describe it. Since there are $s$ teeth and each must contain at least two vertices, the first tooth can contain from $2$ to $n - 2(s-1)$ vertices. Representing a superposition of all the possible assignments of $k$ vertices to this tooth can be done by encoding the Dicke state $\ket{D_k^n}$~\cite{Dicke1954} in the register corresponding to this tooth. Dicke states are equal superpositions of all $\binom{n}{k}$ $n$-qubit computational-basis states with Hamming weight $k$. The state $\ket{D_k^n}$ can be prepared with a circuit of size $O(nk)$ and depth $O(n)$. It follows that a state of the form $\sum_{k=0}^n e^{i\phi_k}\alpha_k \ket{D_k^n}$ can be prepared with a circuit of size $O(n^2)$ and depth $O(n)$ [Theorem 2 of Ref.~\cite{Bartschi2019}]. We use such a circuit to prepare the state $\sum_{k=2}^{n-2(s-1)} \alpha_k \ket{D_k^n}$ in the register associated with tooth $T_1$. In Figures~\ref{fig:teeth_circuit} and~\ref{fig:handle_circuit}, we use the gate $D_{p,q}^r$ to denote a circuit preparing the superposition
\begin{equation}\label{eq:Dicke_superposition}
  D_{p,q}^r \ket{0}^{\otimes r} = \sum_{k=p}^{q} \alpha_k \ket{D_k^r}.
\end{equation}
We discuss how the coefficients $\alpha_k$ are determined in Section~\ref{sec:uniform_distribution} to ensure that a uniform superposition of combs is prepared by the circuit. For now, we leave these coefficients unspecified. We use a $Mask$ register of $n$ qubits that we initialize in the $\ket{1}^{\otimes n}$ state using an $X$ gate on each qubit of this register. Then, each qubit $i$ of the $Mask$ register acts as the target of a $CNOT$ gate that is performed with qubit $i$ of the $T_1$ register as control. Therefore, for each state of the $T_1$ register, the state of the $Mask$ register indicates which vertices are not included in tooth $T_1$ and can thus be included in the other teeth.

We then introduce a procedure that assigns vertices to tooth $T_j$ based on the assignments to the preceding $j-1$ teeth. This procedure is applied successively to teeth $T_2$ through $T_s$ and the corresponding circuit is illustrated between the two red dashed vertical lines in Fig.~\ref{fig:teeth_circuit}. First, using the $Mask$ register, we encode in a $Count$ register of $n$ qubits the number of vertices that are not yet included in a tooth using a one-hot encoding. Therefore, the state of qubit $c$ of the $Count$ register is $\ket{1}$ if $c$ qubits in the $Mask$ register are in state $\ket{1}$. This is performed by the multi-controlled $E$ gate for which we describe a circuit implementing it in Section~\ref{subsubsec:one-hot_circuit}.

Now, tooth $T_j$ must have at least two vertices assigned to it. In the case where $c$ vertices have not yet been assigned to any tooth, tooth $T_j$ can be assigned at most $c - 2(s-j)$ vertices to keep at least two vertices for each of the remaining $s-j$ teeth. Therefore, for states of the $Mask$ register containing $c$ ones, we would like to prepare a superposition of all assignments of 2 to $c - 2(s-j)$ vertices chosen among the $c$ not yet assigned vertices to tooth $T_j$. To achieve this, we prepare in a $Dicke$ register of $n$ qubits the state $\sum_{k=2}^{c-2(s-j)} \alpha_k \ket{D_k^c}$ with the help of a controlled version of the $D_{p,q}^r$ gate with qubit $c$ of the $Count$ register as control. This is repeated from $c=2(s-j+1)$, which corresponds to the case where only two vertices remain for each of the tooth $T_j$ to $T_s$, to $c=n-2(j-1)$ corresponding to the case where only two vertices where assigned to each of the first $j-1$ teeth. Consequently, this procedure affects only the first $n-2(j-1)$ qubits of the $Dicke$ register.

All that remains is to swap the first $c$ qubits of the $Dicke$ register to the available positions in the $T_j$ register. This is done using $n$ controlled swaps, each exchanging the first qubit of the $Dicke$ register with qubit $i$ of the $T_j$ register and qubit $i$ of the $Mask$ register acting as control. Each of these controlled swap is followed by a controlled cyclic rotation of the first $n-2(j-1)$ qubits of the $Dicke$ register with qubit $i$ of the $Mask$ register as control. Since the $Mask$ qubits are used as controls, exactly $c$ of these operations will be performed for a given state of the $Mask$ register. The controlled cyclic rotation ensures that these $c$ operations will swap out all $c$ qubits of the $Dicke$ register with their clean $T_j$ counterpart. Hence, the $Dicke$ register will have returned to the $\ket{0}^{\otimes n}$ state. Finally, the $Count$ register is uncomputed and the $Mask$ register is updated to keep track of the vertices assigned to tooth $T_j$.

\begin{figure}
  \centering
  \includegraphics[width=\textwidth]{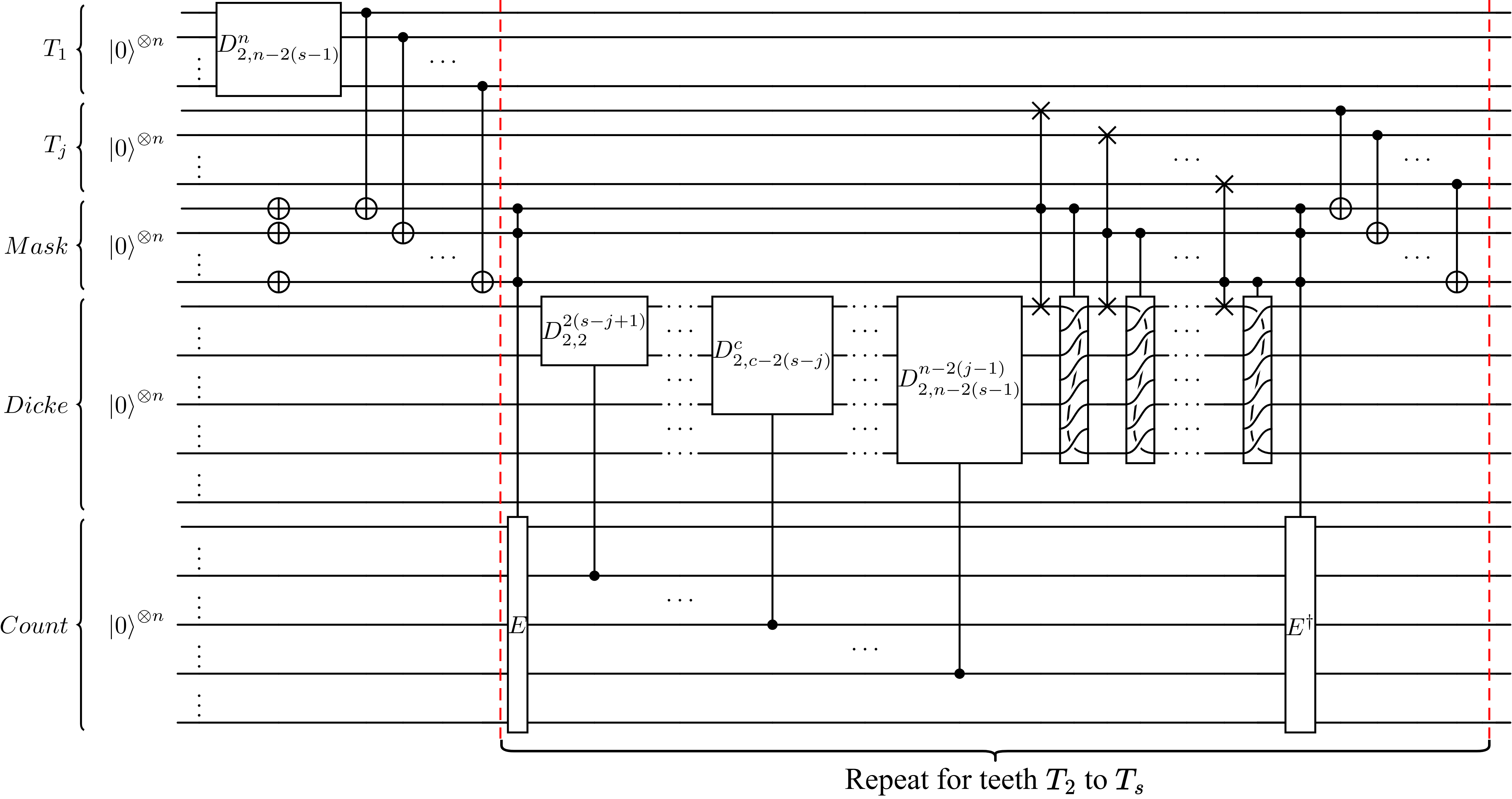}
  \caption{Circuit $U_T$ for preparing a superposition of teeth assignments.}
  \label{fig:teeth_circuit}
\end{figure}

\subsubsection{Handle-preparation circuit} \label{sec:handle}

Having prepared the superposition of teeth, we now prepare a superposition of all compatible vertex assignments to the handle. The circuit used to achieve this is illustrated in Fig.~\ref{fig:handle_circuit}. We first note that for a tooth with $v$ vertices, the handle must contain at least one of them and at most $v-1$. The idea is therefore to create a superposition of all these possibilities.

The handle register is initially in the $\ket{0}^{\otimes n}$ state while the other registers carry their states from tooth-preparation circuit. The following procedure is repeated for each tooth; we describe it for tooth $T_j$. We use the multi-controlled $E$ gate to encode in the $Count$ register the number of vertices assigned to tooth $T_j$. If there are $v$ qubits in state $\ket{1}$ in tooth $T_j$ we create the superposition $\sum_{k=1}^{v-1} \alpha_k \ket{D_{k}^v}$ using the controlled version of the $D_{p,q}^r$ gate with qubit $v$ of the $Count$ register as control. This is repeated for $v=2$ to $v=n-2(s-1)$. As in the tooth-preparation circuit, we now swap the first $v$ qubits in the $Dicke$ register to the available positions in the $H$ register. This is done using $n$ controlled swaps, each exchanging the first qubit of the $Dicke$ register with qubit $i$ of the $H$ register and qubit $i$ of the $T_j$ register acting as control. Each of these controlled swaps is followed by a controlled cyclic rotation of the first $n-2(s-1)$ qubits of the $Dicke$ register with qubit $i$ of the $T_j$ register as control. Since the $T_j$ register qubits are used as controls, exactly $v$ of these operations will be performed for a given state of the $T_j$ register, after which the $Dicke$ register will have returned to the $\ket{0}^{\otimes n}$ state. Finally, the $Count$ register is uncomputed.

At this point in the circuit, the $Mask$ register encodes which vertices are contained in no teeth. Since each vertex in no tooth may either belong to the handle or lie outside it, we perform a controlled Hadamard gate with qubit $i$ of the $Mask$ register as control and qubit $i$ of the $H$ register as target. Finally, the $Mask$ register is uncomputed by performing a CNOT with qubit $i$ of the $Mask$ register as target and qubit $i$ of register $T_j$ as control. This is repeated for all teeth before applying an $X$ gate to all $Mask$ register qubits.

\begin{figure}[t]
  \centering
  \includegraphics[width=\textwidth]{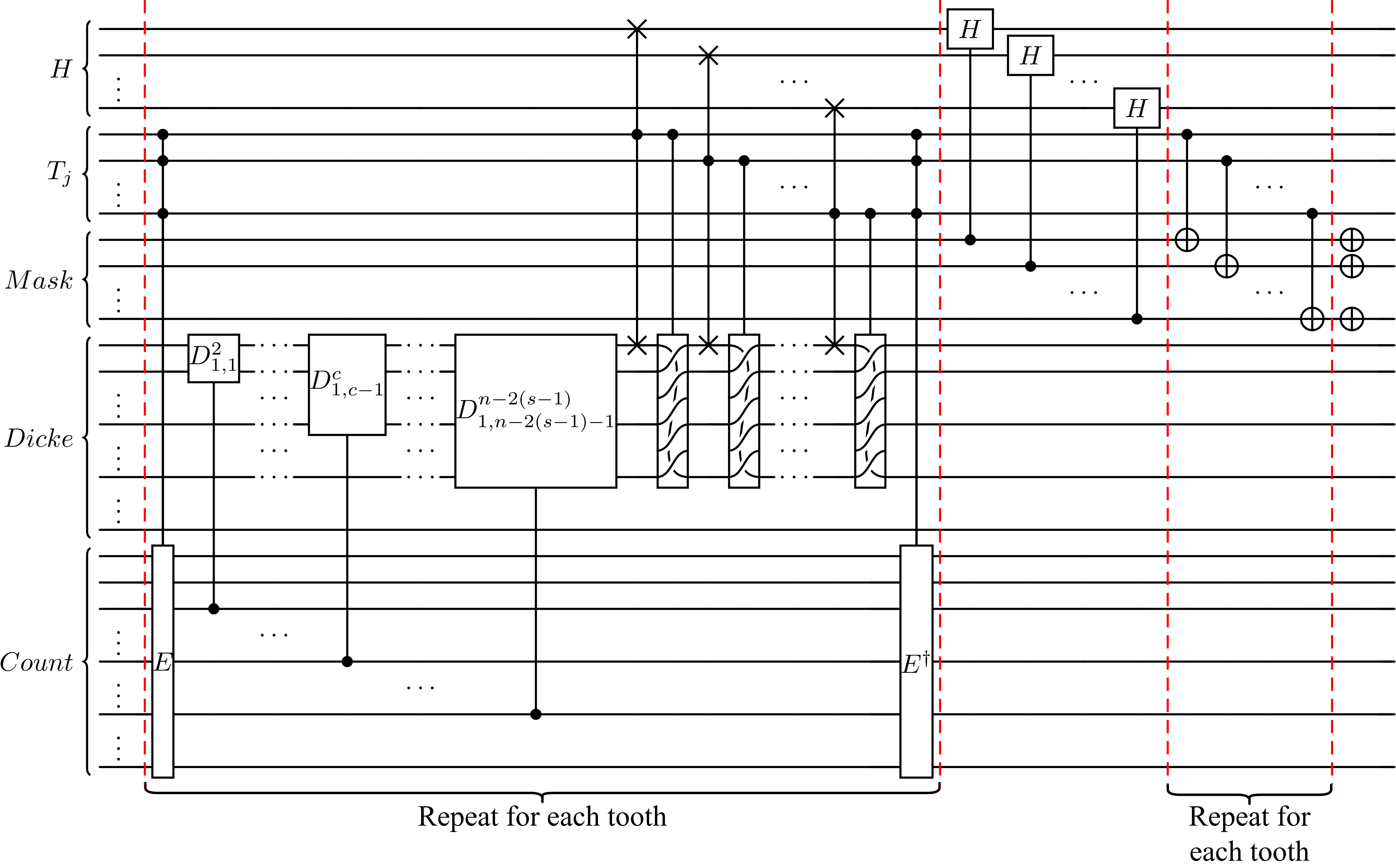}
  \caption{Circuit $U_H$ for preparing a superposition of vertex assignments to the handle. The $T_j$ and $Mask$ registers carry their states from tooth preparation while the other registers are in the $\ket{0}^{\otimes n}$ state initially.}
  \label{fig:handle_circuit}
\end{figure}

\subsubsection{The one-hot encoding circuit}\label{subsubsec:one-hot_circuit}

Figure~\ref{fig:one-hot_circuit} illustrates the implementation of the multi-controlled $E$ gate for $n=3$. This circuit is inspired by a similar one presented by Kaye and Mosca~\cite{Kaye2001}. Initially, qubit $0$ of the $Count$ register is prepared in the state $\ket{1}$, while all other qubits in this register are initialized to $\ket{0}$. A cascade of conditional swap gates (Fredkin gates~\cite{Fredkin1982}) is then applied, with the data qubits serving as controls. These data qubits correspond to the $Mask$ register in Fig.~\ref{fig:teeth_circuit} and to the $T_j$ registers in Fig.~\ref{fig:handle_circuit}. For each data qubit in state $\ket{1}$, the $\ket{1}$ state in the $Count$ register is shifted by one position. Consequently, after processing all data qubits, the Hamming weight of the $Data$ register is encoded in one-hot form in the $Count$ register: qubit $c$ is in the state $\ket{1}$ if and only if exactly $c$ data qubits are in the state $\ket{1}$. Since the Hamming weight of the $Data$ register is guaranteed to be strictly less than $n$ for the circuits presented in Figures~\ref{fig:teeth_circuit} and~\ref{fig:handle_circuit}, an $n$-qubit $Count$ register, indexed from $0$ to $n-1$, is sufficient in these cases. For $n$ data qubits, the circuit uses $O(n^2)$ Fredkin gates, each of which have constant-size decomposition into two-qubit gates~\cite{Smolin1996}.

\begin{figure}[t]
  \centering
  \begin{quantikz}[row sep={0.5cm,between origins}]
    \lstick[3]{$Data$}   \midstick[3,brackets=none]{$\ket{\Psi}$}               &          & \ctrl{4} &          &          &          &          &          & \qw \\
                                                                                &          &          & \ctrl{4} & \ctrl{3} &          &          &          & \qw \\
                                                                                &          &          &          &          & \ctrl{4} & \ctrl{3} & \ctrl{2} & \qw \\
    \lstick[4]{$Count$}   \midstick[4,brackets=none]{$\ket{0}^{\otimes 4}$}     & \gate{X} & \swap{1} &          & \swap{1} &          &          & \swap{1} & \qw \\
                                                                                &          & \targX{} & \swap{1} & \targX{} &          & \swap{1} & \targX{} & \qw \\
                                                                                &          &          & \targX{} &          & \swap{1} & \targX{} &          & \qw \\
                                                                                &          &          &          &          & \targX{} &          &          & \qw
  \end{quantikz}
  \caption{Decomposition of the multi-controlled $E$ gate appearing in Figures~\ref{fig:teeth_circuit} and~\ref{fig:handle_circuit} for $n=3$. Conditioned on the data qubits, controlled-swap (Fredkin) gates are applied in cascade. This maps the number of data qubits in state $\ket{1}$ to a one-hot encoding in the $Count$ register.}
  \label{fig:one-hot_circuit}
\end{figure}

\subsubsection{Construction of the full mixing operator}

Let $\mathcal{C}_s^n$ denote the set of tuples $C=(H,T_1,\ldots,T_s)$ satisfying the comb conditions \ref{cond1} -- \ref{cond4}. The target state prepared by the unitary $U_S = U_H U_T$, realized with the circuits illustrated in Figures~\ref{fig:teeth_circuit}, \ref{fig:handle_circuit} and~\ref{fig:one-hot_circuit}, is therefore
\begin{equation}\label{eq:set_of_combs}
    \ket{F_s^n} = U_H U_T |0\rangle^n = \sum_{C \in \mathcal{C}_s^n} \kappa_C \ket{C},
\end{equation}
where $\kappa_C$ is a normalization factor that depends on the currently undefined factors $\alpha_k$ in Eq.~\eqref{eq:Dicke_superposition}. In the following section, we show how to ensure that $\kappa_C = \frac{1}{\sqrt{|\mathcal{C}_s^n|}} \; \forall \; C \in \mathcal{C}_s^n$. This leads to the  mixing unitary
\begin{equation}\label{eq:mixing_unitary}
    U_M(\beta)=e^{-i\beta \ket{F_s^n}\bra{F_s^n}} = U_H U_T (\mathds{1}-(1-e^{-i\beta})(\ket{0}\bra{0})^{\otimes n}) U_T^\dagger U_H^\dagger.
\end{equation}
Unitaries $U_S$, $U_P$ and $U_M$ can then be used by a GM-QAOA algorithm to find violated combs. With an additional multiplicative $O(n)$ time cost, one can run this GM-QAOA algorithm with different values of $s$ to search over all comb configurations.

\subsection{Uniform distribution}\label{sec:uniform_distribution}

We now explain how to choose the coefficients $\alpha_k$ in Eq.~\eqref{eq:Dicke_superposition} to ensure that $U_S$ prepares a uniform superposition of combs. First note that every distinct comb with $s$ teeth has $2s!$ representations in $\mathcal{C}_s^n$: $s!$ orderings of its teeth and two choices of handle, $H$ or $V\backslash H$. Indeed, ${\delta(H)=\delta(V\setminus H)}$, and complementing the handle preserves the comb conditions. Since this multiplicity is constant, a uniform distribution over the elements of $\mathcal{C}_s^n$ also induces a uniform distribution over distinct comb inequalities.

At any given moment during the execution of the circuit, the coefficients $\alpha_k$ must account for the number of elements in $\mathcal{C}_s^n$ that can be represented given the current partial vertex assignment and each possible subsequent assignment. Crucially, this number depends only on the number of vertices that remain available after assigning vertices to tooth $T_j$ and the number of tooth registers that remain to be filled (up to $T_s$). Let $N(c,t)$ be the number of ways to complete a partial assignment when $c$ vertices have not yet been assigned to a tooth and $t$ teeth remain. The count includes the eventual choices of the handle intersection with each remaining tooth, even though these handle qubits are prepared at a later stage of the circuit. We set $N(c,t)=0$ when $c<2t$, since every remaining tooth requires at least two vertices.

When $t=0$, all $c$ remaining vertices belong to no tooth and may independently be included in or excluded from the handle. Therefore,
\begin{equation}\label{eq:N_c_0}
    N(c,0)=2^c .
\end{equation}
For $t\geq 1$, suppose that the next tooth has cardinality $k$. There are $\binom{c}{k}$ ways to choose its vertices. Once the tooth is fixed, its intersection with the handle can be any nonempty proper subset, giving $2^k-2$ possibilities. The remaining $t-1$ teeth and handle memberships can then be completed in $N(c-k,t-1)$ ways. Hence,
\begin{equation}\label{eq:recurrence_relation}
    N(c,t) = \sum_{k=2}^{c-2(t-1)} \binom{c}{k} \bigl(2^k-2\bigr) N(c-k,t-1),
\end{equation}
where the summand in the recurrence corresponds to the number of complete encoded combs for which the next tooth has cardinality $k$. In particular, $N(n,s) = \lvert\mathcal{C}_s^n\rvert$. Consequently, when $c$ vertices and $t$ teeth remain, we configure the $D_{p,q}^r$ gates in the tooth-preparation circuit to prepare $\sum_{k=p}^q \alpha_k \ket{D_k^c}$ ($r=c$ in this context) with
\begin{equation}\label{alpha_k_tooth}
    \alpha_k = \sqrt{\frac{\binom{c}{k}\bigl(2^k-2\bigr)N(c-k,t-1)}{N(c,t)}}\:.
\end{equation}
Normalization of these amplitudes follows from Eq.~\eqref{eq:recurrence_relation}. Additionally, for a tooth of cardinality $k$, we set the $D_{p,q}^r$ gates in the handle-preparation circuit to prepare the state
\begin{equation}\label{alpha_k_handle}
    \ket{D_{1,k-1}^k} = \sum_{\ell=1}^{k-1} \sqrt{\frac{\binom{k}{\ell}}{2^k-2}}\,\lvert D_\ell^k\rangle .
\end{equation}
Given these gate choices, the state $\sum_{k=p}^q \alpha_k \ket{D_k^c} \ket{D_{1,k-1}^k}$ is shared by the register associated to this tooth and the handle register. Within a general Dicke state $\lvert D_k^n\rangle$, every computational basis state has amplitude $1/\sqrt{\binom{n}{k}}$. Consequently, for a specified tooth and a specified valid handle intersection, the combined amplitude contributed by these choices is
\begin{equation}
    \alpha_k\frac{1}{\sqrt{\binom{c}{k}}} \frac{\sqrt{\binom{k}{l}}}{\sqrt{2^k-2}}\frac{1}{\sqrt{\binom{k}{l}}} = \sqrt{\frac{N(c-k,t-1)}{N(c,t)}}.
\end{equation}

This equation gives a uniform state. To see why, let
\begin{equation}
    c_0=n,
    \qquad
    c_j=n-\sum_{i=1}^{j}\lvert T_i\rvert
    \quad (j=1,\ldots,s).
\end{equation}
After all teeth and their handle intersections have been specified, the amplitude accumulated from these choices is
\begin{equation}
    \begin{split}
        A &= \prod_{j=1}^{s}\sqrt{\frac{N(c_j,s-j)}{N(c_{j-1},s-j+1)}}\\
        &= \sqrt{\frac{N(c_1,s-1)}{N(c_{0},s)}} \sqrt{\frac{N(c_2,s-2)}{N(c_{1},s-1)}} \dots \sqrt{\frac{N(c_s,0)}{N(c_{s-1},1)}}\\
        &= \sqrt{\frac{N(c_s,0)}{N(n,s)}}.
    \end{split}
\end{equation}
The $c_s$ vertices contained in no tooth are assigned to the handle by controlled Hadamard gates, contributing $\frac{1}{\sqrt{2^{c_s}}}$ to the amplitude of each particular assignment. Therefore, setting the parameters $\alpha_k$ of the $D_{p,q}^r$ gates in the tooth-preparation circuit according to Eq.~\eqref{alpha_k_tooth} and configuring these gates in the handle-preparation circuit to prepare states of the form given in Eq.~\eqref{alpha_k_handle} gives amplitudes $\kappa_C$ of each computational basis state in the superposition prepared by $U_S$ (see Eq.~\eqref{eq:set_of_combs}) that are equal to
\begin{equation}\label{eq:amplitude_per_basis_state}
        \kappa_C = A\frac{1}{\sqrt{2^{c_s}}}
        =\sqrt{\frac{2^{c_s}}{\lvert\mathcal{C}_s^n\rvert}}\,\frac{1}{\sqrt{2^{c_s}}}
        =\frac{1}{\sqrt{\lvert\mathcal{C}_s^n\rvert}} ,
\end{equation}
where we used Eq.~\eqref{eq:N_c_0} and $N(n,s) = \lvert\mathcal{C}_s^n\rvert$.

The table of values $N(c,t)$ contains $O(ns)$ entries, and each entry requires a sum of at most $O(n)$ terms. Moreover, for $t$ remaining teeth, each of the $c$ remaining vertices can fall in one of $2t+2$ categories: for each tooth $T_j$ there are two categories ($T_j \cap H$ and $T_j \setminus H$), plus handle only and neither. Therefore,
\begin{equation}
    N(c,t)\leq (2t+2)^c\leq (n+2)^n
\end{equation}
so every table entry has $O(n\log n)$ bits. The table can therefore be computed classically in polynomial time before the quantum circuit is compiled. The values of $\alpha_k$ are then classical constants used to specify the controlled rotations.

\section{Resource analysis}\label{sec:resources}

\begin{proof}[Proof of Theorem~\ref{thm:main}]
The encoding uses one $n$-qubit register for the handle and one $n$-qubit register for each of the $s$ teeth, giving $n(s+1)$ data qubits. The $Mask$, $Dicke$, and $Count$ registers add $3n$ ancilla qubits. Therefore, $O(sn)$ logical qubits are required (error correction and routing are excluded).

We now count the number of gates required to implement the unitary $U_S$. A controlled Dicke-state superposition on at most $n$ qubits can be prepared using $O(n^2)$ gates. The first tooth requires one such block. For each later tooth, the number of available vertices can take $n-2s+1$ possible values, producing $(s-1)(n-2s+1)$ controlled Dicke blocks. Similarly, for each of the $s$ teeth, the handle-intersection circuit contains a controlled Dicke block for each of the $n-2s+1$ possible tooth cardinalities. Thus, one application of $U_S$ contains $O(sn)$ controlled Dicke blocks requiring $O(sn^3)$ gates.

The one-hot encoding circuit uses $O(n^2)$ Fredkin gates, which have constant-size decompositions into two-qubit gates. This circuit is executed for the $s-1$ later teeth and for each of the $s$ handle-intersection stages. Including computation and uncomputation, this contributes $O(sn^2)$ gates.

Each controlled cyclic permutation can be realized using $O(n)$ Fredkin gates. Controlled cyclic permutations and Fredkin gates are used $n$ times for the $s-1$ later teeth and for each of the $s$ handle-intersection stages. Therefore, these contribute $O(sn^2)$ gates. $Mask$ register updates contribute $2sn$ CNOT gates and $2n$ $X$ gates, while $n$ controlled Hadamard gates are used at the end of the circuit. The $U_S$ unitary therefore requires $O(sn^3)$ gates.

If arbitrary single-qubit rotations are not available, we can decompose the circuit into $M=O(sn^3)$ gates among the set $\{H,S,T,\mathrm{CNOT}\}$. Kliuchnikov, Maslov and Mosca~\cite{Kliuchnikov2013} give such a construction approximating any single-qubit gate to accuracy $\delta$ using $O(\log(1/\delta))$ Clifford+$T$ gates and at most two ancillary qubits. We can choose the error of each gate to be at most $\delta=\epsilon/M$. The error accumulated by applying the sequence of $M$ gates add at most linearly and the trace distance between pure states is at most their Euclidean distance \cite{Nielsen_Chuang_2010}. The trace-distance error is therefore at most $\epsilon$ and the compiled gate count is $O\!\left(M\log (M/\epsilon)\right) = O\!\left(sn^3\log(n/\epsilon)\right)$.

The tooth-preparation circuit generates a superposition over all assignments of vertices to $s$ pairwise disjoint teeth, each containing at least two vertices. For each such assignment, the handle-preparation circuit generates a superposition over all compatible handles. Thus, every element of $\mathcal{C}_s^n$ appears in the prepared superposition.

Equations~\eqref{eq:N_c_0}--\eqref{eq:amplitude_per_basis_state} show that we can ensure that every encoded comb has amplitude $1/\sqrt{N(n,s)}$. Since ${N(n,s)=|\mathcal{C}_s^n|}$, the desired uniform state is prepared. At the end of the circuit, all preparation ancillas are returned to their initial state.
\end{proof}

Regarding the phase-separation operator, each nonzero edge weight produces one commuting $ZZ$ rotation in the handle register and in each tooth register. Hence, one phase-separation layer contains $(s+1)r$ rotations, or $O(sr)$ two-qubit gates where $r=|\{e:x_e^*>0\}|$. In the dense worst case this is $O(sn^2) \subset O(n^3)$. However, this bound can be lowered as the simplex guarantees $r \leq 2n+m$ where $m$ is the number of inequalities in the system $Cx \leq d$ of Eq.~\eqref{eq:TSP_LP} ~\cite{applegate2007}.

The GM-QAOA mixing operator uses one application each of $U_S = U_H U_T$ and $U_S^\dagger = U_T^\dagger U_H^\dagger$, two layers of $X$ gates and one multi-controlled phase shift gate which can be implemented with $O(sn)$ gates~\cite{Bartschi2020}. The $U_S$ unitary therefore dominates the complexity of the mixing operator. GM-QAOA requires initializing the state with $U_S$, applying the phase-separation operator and the mixer operator once each. Therefore, executing $p$ QAOA layers requires $O(psn^3)$ gates.

Classical parameter optimization and repeated sampling incur additional costs. The present construction aims to provide a resource-friendly heuristic. When used for cut generation in branch-and-cut, parameter optimization and sampling can stop as soon as a violated comb is found. This stopping criterion may substantially reduce the practical search effort. We also note that barren plateaus can be avoided for sufficiently deep GM-QAOA circuits for a broad class of optimization problems~\cite{Tsvelikhovskiy2025}.

Nevertheless, polynomial resource requirements per circuit execution do not imply polynomial total effort to find a violated comb. Analyses of GM-QAOA established limitations on probability amplification and approximation performance~\cite{Bridi2024,Xie2025}. In particular, if a fraction $\rho$ of the encoded combs are violated, the probability of measuring a violated comb after $p$ layers is at most $\min\{1,(2p+1)^2\rho\}$. If $\rho$ is exponentially small, achieving a fixed success probability can therefore require exponentially many layers or repeated executions. However, it may also be the case that for some nodes of the branch-and-cut tree, $\rho$ is not so small and many combs represent violated inequalities that can be added to the LP relaxation. A precise study of $\rho$ for practical TSP instances, and as such of the expected performance of GM-QAOA for the separation of comb inequalities, represents the natural next step of this study.

\section{Conclusion}\label{sec:conclusion}

We have presented a quantum heuristic separation algorithm for the class of comb inequalities in the TSP, and shown how it can be included in a hybrid branch-and-cut framework. Its running time is polynomial when polynomial budgets are prescribed for the number of QAOA rounds, parameter optimization steps and samples gathered. The algorithm prepares a quantum superposition of all possible combs in a quantum register and then runs the Grover-Mixer Quantum Alternating Operator Ansatz algorithm. We showed how to build the state-preparation unitary providing the superposition of combs and gave an explicit description of the cost Hamiltonian to be used in the phase-separator unitary. The procedure can be applied to every odd integer $s$ satisfying $3\leq s\leq\lfloor n/2\rfloor$. There are only $O(n)$ such values, so allowing the number of teeth to vary preserves polynomial running time and makes the full family of comb inequalities accessible to the search. The performance of our approach relative to classical methods remains to be assessed numerically and experimentally.

A practical motivation for our approach is its integration into a classical branch-and-cut solver. Applying GM-QAOA directly to the TSP, as considered by B\"artschi and Eidenbenz~\cite{Bartschi2020}, searches for feasible tours and therefore provides upper bounds on the optimal tour length. Our approach instead searches for inequalities that can strengthen the LP relaxation and improve lower bounds, while the tour itself is provided by the classical linear program solver. Each sampled candidate inequality can be checked classically for validity and violation before it is added to the relaxation. If the quantum routine fails to find a violated comb, the existing LP bounds remain valid, although an opportunity to strengthen them may be missed. Provided that the classical branching, feasibility checks, and pruning rules are retained, the correctness of the solver does not depend on the success of the quantum heuristic.

As shown by Gr\"{o}tschel and Padberg \cite{Grotschel1979a}, the number of combs is given by
\begin{equation}\label{eq:number_combs}
\nu^C(n) = \sum_{q=3}^{n-3} \frac{1}{2}\binom{n}{q}
           \sum_{j=3}^{n-q} \binom{n-q}{j}
           \sum_{\substack{k=3 \\ k \text{ odd}}}^{\min(j,q)}
           A_k^j \sum_{p=k}^{q} \frac{1}{k!} \binom{q}{p} A_k^p,
\end{equation}
where
\begin{equation}
A_k^p = \sum_{j=0}^{k} (-1)^j \binom{k}{j} (k-j)^p.
\end{equation}
Here, combs differing only by tooth permutations or handle complementation are counted as the same comb. Our encoding retains these equivalent representations. For fixed $s$, the number of basis states is $N(n,s)$, whereas the corresponding contribution to $\nu^C(n)$ is $N(n,s)/(2s!)$. On the other hand, the number of undirected tours in the TSP is given by $\frac{(n-1)!}{2}$. As illustrated in Figure~\ref{fig:ratio_tours_combs}, the ratio of the number of tours to the number of distinct combs grows exponentially with the number of cities. Consequently, an encoding assigning a single basis state to encode all $2s!$ representations of each distinct comb would yield an exponentially smaller search space than an encoding assigning one state to each undirected tour. In our encoding, the multiplicity of comb representations is the same for each distinct comb for a fixed $s$, so the comparison remains meaningful in terms of the number of inequivalent candidates. This observation of a drastically reduced search space further motivates the approach we propose compared to applying QAOA directly to the TSP as considered in~\cite{Bartschi2020}.

\begin{figure}
  \centering
  \includegraphics[width=\textwidth]{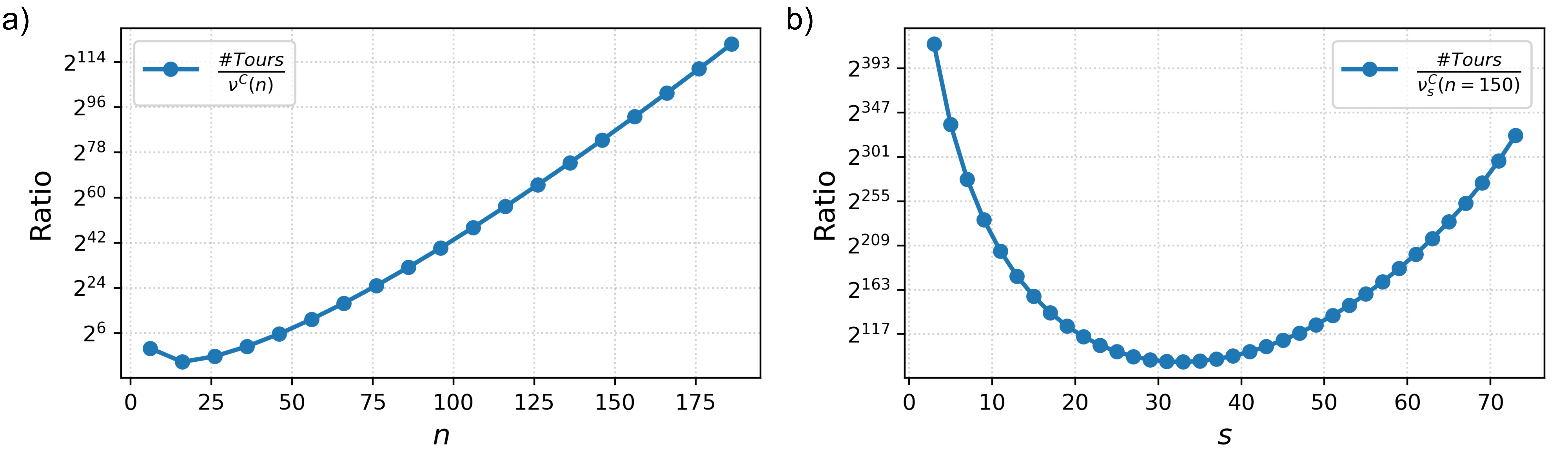}
  \caption{Panel a) shows the ratio of the number of undirected TSP tours to the number of combs $\nu^C$ given by Eq.~\eqref{eq:number_combs}, as a function of the number of vertices $n$. Panel b) shows the corresponding ratio for combs with exactly $s$ teeth $\nu_s^C$, as a function of $s$, with $n=150$. In both panels, combs differing only by tooth permutations or handle complementation are counted as the same comb.}
  \label{fig:ratio_tours_combs}
\end{figure}

For large, practically relevant TSP instances, the substantial qubit and gate requirements of the proposed algorithm make classical simulation and execution on current quantum hardware impractical. Evaluation at this scale should be pursued as computational resources and quantum hardware improve. This work also suggests directions for extending quantum heuristic separation to other families of valid inequalities, including clique-tree and path inequalities. Such extensions would require encodings of their defining combinatorial structures, efficient preparation of feasible states, and corresponding phase-separation operators. Another direction is to investigate decoded quantum interferometry~\cite{Jordan2025} as an alternative to variational optimization. Applying this framework to comb separation would require a suitable formulation of the objective and feasibility constraints together with an efficient decoding procedure. Determining whether such a formulation can exploit the structure of combs remains an open question.

We hope that this work encourages further investigation of quantum subroutines for cutting-plane generation and, more broadly, of hybrid algorithms in which quantum resources contribute to the performance of classical optimization methods.

\printbibliography

\end{document}